\DeclareMathOperator*{\argmax}{arg\,max}
\theoremstyle{plain}
\newtheorem{lemma}{Lemma}[section]
\newtheorem{theorem}[lemma]{Theorem}
\newtheorem{observation}[lemma]{Observation}
\newtheorem{example}[lemma]{Example}
\theoremstyle{definition}
\newtheorem{definition}[lemma]{Definition}
\theoremstyle{remark}
\title{A Note on Optimal Fees for Constant Function Market Makers}
\author{Robin Fritsch}
\author{Roger Wattenhofer}
\affil{ETH Zürich\\ \texttt{\{rfritsch,wattenhofer\}@ethz.ch}}
\date{}
\begin{document}

\maketitle

\begin{abstract}
We suggest a framework to determine optimal trading fees for constant function market makers (CFMMs) in order to maximize liquidity provider returns.
In a setting of multiple competing liquidity pools, we show that no race to the bottom occurs, but instead pure Nash equilibria of optimal fees exist.
We theoretically prove the existence of these equilibria for pools using the constant product trade function used in popular CFMMs like Uniswap. We also numerically compute the equilibria for a number of examples and discuss the effects the equilibrium fees have on capital allocation among pools.
Finally, we use our framework to compute optimal fees for real world pools using past trade data.
\end{abstract}

\section{Introduction}
In the past year, blockchains capable of executing smart contracts and decentralized finance (DeFi) applications built on top of them have seen tremendous growth in use.
One particularly popular and novel kind of applications are \emph{constant function market makers} (CFMMs) such as Uniswap, Balancer or Curve.

These decentralized exchanges (DEXs) allow users to trade tokens in a fully decentralized and non-custodial manner.
Moreover, they constitute a completely new type of market design compared to traditional exchanges using central limit order books:
Instead of matching orders of traders to liquidity providers' orders in the order book, they let traders swap tokens directly with a smart contract that holds the reserves of the liquidity providers.
For any pair of tokens, liquidity providers can create a \emph{pool} and deposit reserves of both tokens into the contract.
Traders can then send an amount of one of the tokens to the contract and receive a certain amount of the other token in return. The exact amount received is determined by a trade function depending on the reserves in the contract.
In the case of Uniswap, the amount is chosen such that the product of the reserves in the pool stays constant.
For every trade, the trader pays a small trading fee which is distributed pro rata among all liquidity providers contributing to the pool.

With the trading volume of CFMMs growing rapidly and already rivaling that of many centralized exchanges, it becomes essential to better understand this novel kind of market.
Since the number of CFMMs and liquidity pools is also growing quickly, a particularly relevant aspect is to examine how these markets compete with each other.
How should the pools optimally set their trading fees in order to attract the maximal possible fee revenue?
And how should liquidity providers allocate their liquidity among the growing number of pools?
In this paper, we suggest a framework to answer these questions.

We study the optimal fee problem in a setting of multiple liquidity pools competing with each other for trade volume.
In such a scenario one clearly expects a race to the bottom: all pools successively lowering their fees until they are almost zero. We show however, that this is not the case. Instead, we find that pure Nash equilibria exist, i.e.\ configurations of optimal fees for which no pool can gain anything by changing its fees, neither by increasing nor decreasing them.
For pools using the constant product trade function, such as Uniswap, we theoretically prove the existence of such pure Nash equilibria.
We also numerically compute these equilibria for several examples.
Here we find that larger pools can charge higher fees than smaller pools in the equilibria.
However, we somewhat surprisingly find that in the equilibria smaller pools earn a higher amount of fees relative to their size compared to larger pools. In other words, the return on investment is better for smaller pools.
This means liquidity is not always incentivized to concentrate: current or new liquidity providers would prefer to invest in smaller pools or might even create a new pool.

Using past trade data, we can also estimate optimal fees for real world pools.
We find that equilibrium fees are significantly lower than today's standard fee of $0.3\%$ suggesting that competition should lead to lower fees in the future.
In particular, we see that the pools would profit from unilaterally lowering their fees.

In game theoretic terms, we study a continuous $n$-person game where the liquidity pools are the players and the fees they choose to charge are their strategies. The utility of each pool is given by the amount of fees it receives.
Due to the nature of CFMMs, we can calculate from the pool reserves and fees how traders should optimally execute their trades using all available pools.
This makes it is possible to exactly quantify the change in trading volume the pools experience when changing their fees.
Note it can be advantages to split a trade among multiple pools specially for balanced pools with similar fees.
In practice, aggregators such as 1inch or Matcha help traders find and execute these optimal trades.

\iffalse
As the price functions of the pools are known, we can calculate from the pool reserves and fees how traders should optimally execute their trades using all available pools.
This makes it is possible to exactly quantify the change in trading volume the pools experience when changing their fees.
In practice, aggregators such as 1inch or Matcha help traders find and execute these optimal trades.

In game theoretic terms, we study a continuous $n$-person game where the liquidity pools are the players and their fee rates are their strategies.
We illustrate with past trade data that lowering fees can lead to higher earnings for some real pools.
Furthermore, we look for Nash equilibria, i.e.\ a situation where no pool can gain anything by changing its fees.
For pools using a Uniswap-like price function, we theoretically prove the existence of such Nash equilibria.
We also numerically compute the equilibria for several examples.
While we only use Uniswap-like price functions in this paper, the general concept and the numerical computations can be applied for CFMMs with any kind of price function.
Our model suggests that competition between price pools should in the future lead to fee rates significantly lower than today's standard rate of $0.3\%$.
Furthermore, we somewhat surprisingly find that in the equilibria smaller pools can experience a higher return on investment than larger pools, i.e.\ liquidity is not always incentivised to concentrate.
\fi

\section{Related Work}
The concept of automated market makers, i.e.\ mechanisms that automated the process of providing liquidity to a market, has been around for quite some time, e.g.\ in form of the logarithmic market scoring rule (LMSR) \cite{hanson03} which is used in prediction markets.
The recently popularised decentralized exchanges such as Uniswap use a new type of automated market maker design called constant function market maker (CFMM) \cite{angeris20oracles}. 
General properties of these markets, in particular how they behave alongside traditional centralized exchanges, have been studied in \cite{angeris19} and \cite{aoyagi2021liquidity}.
Furthermore, \cite{angeris2020curvature} analyzes how the curvature of a CFMM, i.e.\ the choice the of the constant function, influences liquidity provider returns.
For a comparison between traditional limit order book systems and constant product market makers, in particular how the latter can be emulated with certain order book shapes, see \cite{young20}.

The question of finding the optimal fees for constant function market makers has been discussed using a different model in \cite{evans21}:
Here arbitrage trading between the CFMM and an external reference market is considered.
For asset prices following a geometric Brownian motion, it is proven that it is optimal from the liquidity provider's perspective to set the fees arbitrarily low but not equal to zero.
The returns and no-arbitrage prices of liquidity provider shares in CFMMs have also been studied in \cite{tassy2020growth} and \cite{evans20}.
In contrast to these works, we study "real", i.e.\ non-arbitrage trades (these are sometimes referred to as uninformed trades since the traders do not have an information advantage over liquidity providers as arbitrageurs do).
More precisely, we consider traders that simply want to execute a planned trade at the best possible price.

In this note, we briefly touch on the problem of finding the optimal trade across multiple CFMM pools.
This problem of optimally routing trades though a network of constant function market makers is studied in detail in \cite{danos20}.

\section{Model}
We model the problem separately for each trading pair of tokens. 
For a pair of tokens, we consider a number of CFMM pools competing with each other for trade volume.
All trades are uninformed, i.e.\ the trader simply wants to swap a certain amount of the \emph{source token} for the greatest possible amount of the \emph{target token} using all available pools.
We assume that all pools are \emph{balanced} before the trade.
This means that an arbitrarily small amount can be swapped for the same price in all pools.
This is a reasonable assumption as arbitragers will always keep the pools close to balanced. Small remaining imbalances (e.g.\ caused by fees preventing arbitrage) can be ignored as their effects will average out since trades will be made in both possible directions in the pool.

First, we consider a single trade from the source token to the target token of a certain size (in the source token). This trade is executed optimally using all available pools, i.e.\ split among the pools to maximize the amount of target tokens received.
Later we will extend the model by assuming the size of the next trade is chosen from a distribution of trade sizes. This could be the distribution of trades observed in the recent past.

We assume pool fees are charged as a fixed percentage of the trade size (this is the norm on centralized and decentralized exchanges). For each pool, we look for the optimal such percentage from the liquidity provider's perspective, i.e.\ the fee that maximizes the total amount of fees the pool collects from the trade(s).

Each pool has a trade function $r:\mathbb{R}_{\geq 0}\to \mathbb{R}_{\geq 0}$ associated with it. This indicates for any non-negative amount of source tokens, how many target tokens a trader can receive in exchange in this pool.
A trade function naturally satisfies $r(0) = 0$, is monotonically increasing and continuous.
Furthermore, it is concave as the marginal price is increasing, and bounded since the liquidity in the pool is finite.
For our calculations and examples, we use the constant product trade function used by Uniswap and other CFMMs \cite{uniswap_whitepaper}. However, all of this also works for other trade functions.
A Uniswap-like pool with reserves $(A,B)$ and a fee of $s$ has the trade function
\begin{equation*}
    r(x) = B - \frac{AB}{A+(1-s)x}.
\end{equation*}
Assume we are dealing with $n$ pools and let the $i$-th pool have reserves $(A_i,B_i)$ and a fee of $s_i$.
Since we assumed that the pools are balanced, we have $A_i/B_i=A_j/B_j$ for all $i,j\in [1,n]$.
W.l.o.g. we can choose the unit of measurement for the target token such that $A_i=B_i$ for all pools.
Hence, the $i$-th pool has the trade function
\begin{equation}\label{eq:r_i}
    r_i(x) = A_i - \frac{A_i^2}{A_i+(1-s_i)x}.
\end{equation}
In particular, we only need the numbers $A_i$ and $s_i$ to fully characterize a pool. We measure the \emph{size} of a pool in source tokens meaning we say that the size of the $i$-th pool size is $2A_i$.

Lastly, we do not consider any fees besides the pool fees. In particular, we do not consider blockchain transaction fees. This means there is no extra cost for trading with a greater instead of a smaller number of pools.
While transaction fees are currently relatively high and only negligible for large trades on some blockchains, technological improvements are expected to lower transaction fees in near future making this assumption reasonable.

\section{Optimal Trade}
We use the following example to illustrate the next steps.
\begin{example}\label{ex:two_pools}
For a pair of tokens, there are two pools of sizes 2,000,000 and 4,000,000, respectively. Both pools have a trading fee of $0.3\%$. A trader wants to swap 1,000 source tokens for as many target tokens as possible.
\end{example}
Figure \ref{fig:trade_returns} shows how many target tokens the trader receives when swapping a certain fraction of the trade in pool 1 and the remaining part in pool 2. In this example, it is optimal to execute $1/3$ of the trade in pool 1 and $2/3$ in pool 2.

\begin{figure}[h]
    \centering
    \includegraphics[width=\textwidth]{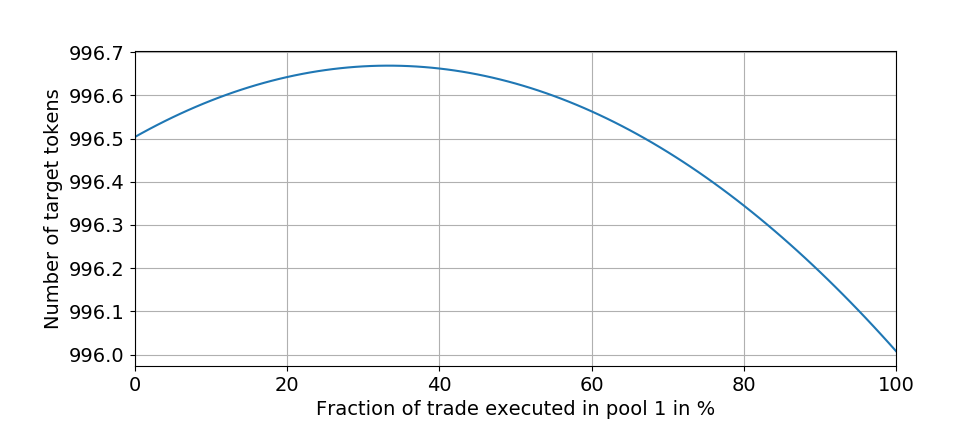}
    \caption{For Example \ref{ex:two_pools}, the graph shows how many target tokens the trader receives for 1000 source tokens when executing a certain fraction of the trade in pool 1 and the remaining part in pool 2.}
    \label{fig:trade_returns}
\end{figure}

Notice that it is always optimal split a trade across all pools if all fees are equal: Since we assumed the pools are balanced before the trade, the marginal price of an unused pool is always strictly lower than that of a used pool.
Hence, moving an arbitrarily small amount to an unused pool leads to a better trade.
Formally, the problem of finding the optimal trade can be stated as follows.

\begin{definition}[Optimal Trade Problem]
Given $n$ pools with trade functions $r_1,\ldots, r_n$ and a trade of size $t$, finding the optimal trade means solving

\begin{alignat}{3}
    &\text{maximize} & & r_1(x_1)+\ldots + r_n(x_1) \notag\\
    &\text{subject to} &\quad & x_1+\ldots +x_n = t \tag{OTP}\label{eq:OTP}\\
    & & & x_1, \ldots,x_n \geq 0\notag
\end{alignat}
\end{definition}

This problem can easily be solved numerically since all $r_i$ are naturally concave.
We use (multi-dimensional) ternary search to find the optimum. This technique does not required any specific knowledge about the trade function, in particular not their derivative.

Figure \ref{fig:split} shows the fraction of the trade that will optimally be traded in pool 1 for different fees in pool 1. (Note that while the graph looks piece-wise linear, it is not as the calculations below will show.)

\begin{figure}[h]
    \centering
    \includegraphics[width=\textwidth]{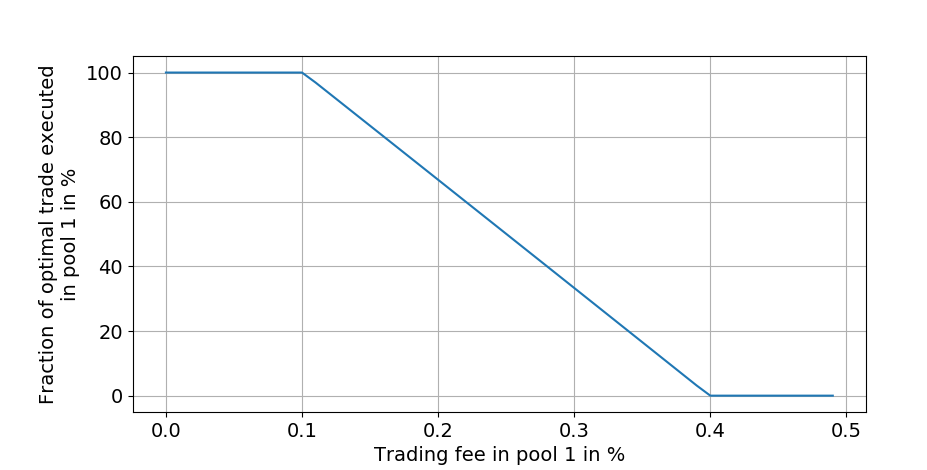}
    \caption{For Example \ref{ex:two_pools}, the graph shows which proportion of the optimal trade is executed in pool 1 for various fees of pool 1.}
    \label{fig:split}
\end{figure}

In order to prove the existence of Nash equilibria, we will solve \eqref{eq:OTP} analytically.
For our purpose, it suffices to solve the problem ignoring the constraint $x_1, \ldots,x_n \geq 0$.
To see this, note that in every Nash equilibrium the pools will choose fees such that the optimal trade satisfies the constraint $x_1, \ldots,x_n > 0$:
Independent of all other fees, every pool can always attract a positive fraction of the trade (and consequently a positive amount of fees) by setting its fees arbitrarily close to 0.
The derived formula will coincide with the actual solution shown in Figure \ref{fig:split} between about $0.1\%$ and $0.4\%$ while differing otherwise.

When ignoring the constraint, \eqref{eq:OTP} can be solved using a Lagrange multiplier. This yields the equations $r_i'(x_i)-\lambda = 0$ for $i=1,\ldots,n$ in addition to $x_1+\ldots +x_n = t$.
For the constant product trade function \eqref{eq:r_i}, the former equation becomes
\begin{equation*}
    \frac{A_i^2}{(A_i+(1-s_i)x_i)^2}(1-s_i)-\lambda = 0.
\end{equation*}
Solving this for $x_i$ yields
\begin{equation}\label{eq:x_i_lagrange}
    x_i = \frac{A_i}{\sqrt{1-s_i}}\frac{1}{\sqrt{\lambda}} - \frac{A_i}{1-s_i}.
\end{equation}
Summing over all $i$ leads to
\begin{equation*}
    t = \left( \sum_{j=1}^n \frac{A_j}{\sqrt{1-s_j}}\right)\frac{1}{\sqrt{\lambda}} - \sum_{j=1}^n \frac{A_j}{1-s_j}.
\end{equation*}
By solving this for $1/\sqrt{\lambda}$ and inserting the result back into \eqref{eq:x_i_lagrange}, we conclude
\begin{equation}\label{eq:x_i}
    x_i = \frac{A_i}{\sqrt{1-s_i}}\frac{\left( \sum_{j=1}^n \frac{A_j}{1-s_j}\right)+ t}{\sum_{j=1}^n \frac{A_j}{\sqrt{1-s_j}}} - \frac{A_i}{1-s_i}.
\end{equation}

It can be quickly checked that if the fees are equal in all pools, i.e.\ $s_1=s_2=\ldots=s_n$, the term in \eqref{eq:x_i} simplifies to $x_i = (A_i/\sum_{j=1}^n A_j)t$. This implies the following observation.

\begin{observation}
For Uniswap pools with identical fees, it is optimal for the trader to split the trade proportional to the pool sizes.
\end{observation}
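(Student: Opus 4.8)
The plan is to obtain the claim by specializing the closed-form expression \eqref{eq:x_i} to the case of equal fees. First I would set $s_1 = \cdots = s_n =: s$ in \eqref{eq:x_i}. Then $A_j/\sqrt{1-s_j} = A_j/\sqrt{1-s}$ and $A_j/(1-s_j) = A_j/(1-s)$ for every $j$, so the two sums appearing in \eqref{eq:x_i} factor as $\sum_{j=1}^n A_j/\sqrt{1-s_j} = \bigl(\sum_{j=1}^n A_j\bigr)/\sqrt{1-s}$ and $\sum_{j=1}^n A_j/(1-s_j) = \bigl(\sum_{j=1}^n A_j\bigr)/(1-s)$.

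Substituting these into \eqref{eq:x_i}, the $\sqrt{1-s}$ factors in numerator and denominator cancel, and the term $A_i/(1-s)$ produced by the first summand is exactly cancelled by the trailing $-A_i/(1-s)$. What remains is $x_i = A_i t/\sum_{j=1}^n A_j$. Since the size of pool $i$ is $2A_i$, this is precisely the statement that the optimal trade is split in proportion to the pool sizes.

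One small point remains: \eqref{eq:x_i} solves \eqref{eq:OTP} only after dropping the constraints $x_i \geq 0$. But $A_i \geq 0$ and $t \geq 0$ force $x_i = A_i t/\sum_{j=1}^n A_j \geq 0$, so the candidate is feasible for the full problem; and since each $r_i$ is concave, a point satisfying the Lagrange stationarity conditions is a global maximizer of \eqref{eq:OTP}. Hence the proportional split is optimal. There is no genuine obstacle here — the argument is a direct substitution plus the standard concavity remark — so the only care needed is in tracking the cancellations and in observing that feasibility of the relaxed solution is automatic.
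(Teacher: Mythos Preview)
Your argument is correct and matches the paper's approach exactly: the paper simply notes that substituting $s_1=\cdots=s_n$ into \eqref{eq:x_i} yields $x_i = (A_i/\sum_j A_j)\,t$, and you have carried out that substitution in detail (with the added, but welcome, remark that the resulting point is feasible and hence optimal by concavity).
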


\section{Optimal Fee Game}
We now study how the pools optimize their fees while competing with each other. In other words, we study how the pools play the \emph{optimal fee game}.
In game theoretic terms, we can think of the situation as a continuous $n$-person game in which each of the $n$ pools tries to maximize the amount of fees it collects.
More precisely, each pool chooses a strategy $s_i$ (its fee) from the set $S_i=[0,1]$ of possible strategies. We write $s = (s_1,\ldots,s_n)$, and denote $s_{-i}$ for the vector of the strategies of all pools except pool $i$, i.e.\ $s_{-i} = s/s_i$.

The $i$-th pool will then receive a trade of size $x_i$ according to \eqref{eq:x_i} leading to a total of $s_i x_i$ in fees.
So the utility function of player $i$ is given by $u_i(s) = s_i x_i$.
Figure \ref{fig:fee_amount} shows the amount of fees pool 1 collects depending on its fee. The optimal strategy for pool 1 in these circumstances is to set its fee to about $0.2\%$.

\begin{figure}[h]
    \centering
    \includegraphics[width=\textwidth]{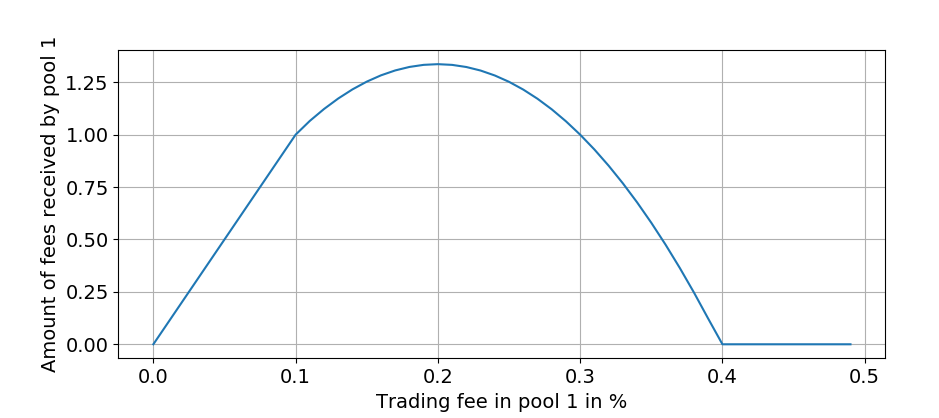}
    \caption{For Example \ref{ex:two_pools}, the graph shows the amount of fees pool 1 receives for various fees of pool 1.}
    \label{fig:fee_amount}
\end{figure}

As a response to pool 1 optimally adjusting its fees, pool 2 might also decide to optimize its fees. This could in turn change the optimal fee for pool 1 again.
This motivates us to find a configuration of fees such that none of the pools can gain anything from changing its fees, in other words: a pure Nash equilibrium.
Before numerically calculating these equilibria, we first theoretically prove their existence for the case of Uniswap-like pools.

\begin{theorem}
For any number $n\in \mathbb{N}$ pools using the constant product trade function and any trade size $t>0$, a Nash equilibrium exists in the optimal fee game.
\end{theorem}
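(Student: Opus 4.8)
The plan is to read the optimal fee game as a continuous quasi-concave game and apply a Kakutani-type fixed point theorem, concretely the Debreu–Glicksberg–Fan theorem: a pure Nash equilibrium exists as soon as each strategy set is nonempty, compact and convex, and each utility $u_i$ is continuous in $s$ and quasi-concave in the player's own strategy $s_i$. The strategy sets $S_i=[0,1]$ are nonempty, compact and convex, so the work is to verify continuity of $u_i$ and quasi-concavity of $u_i(\cdot,s_{-i})$.

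For continuity I would first note that the true solution of \eqref{eq:OTP} (with the nonnegativity constraints) depends continuously on $s$: the feasible simplex $\{x\ge 0:\sum_i x_i=t\}$ is compact and independent of $s$, the objective $\sum_i r_i(x_i)$ is jointly continuous in $(x,s)$, and for $s$ with all $s_i<1$ each $r_i$ is strictly concave, so the maximizer is unique; continuity of $x(s)$ then follows from Berge's maximum theorem, hence $u_i(s)=s_i x_i(s)$ is continuous. The one bad point is $s=(1,\dots,1)$, where every $r_i\equiv 0$ and the split is undetermined; I would handle this by running the argument on the restricted games $[0,1-\delta]^n$ and letting $\delta\to 0$. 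The restricted equilibria lie in a compact set, and any accumulation point $\bar s$ has all coordinates strictly below $1$ — a pool with fee near $1$ collects essentially no fees, while, by the observation already made in the text, it can always guarantee a positive amount by lowering its fee, which would contradict being a best response in the restricted games for small $\delta$. Near such a $\bar s$ everything is continuous, so $\bar s$ is an equilibrium of the full game.

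The heart of the proof is quasi-concavity of $u_i(\cdot,s_{-i})$, and here I would use the closed form \eqref{eq:x_i}. Fixing $s_{-i}$, write $Q=\sum_{j\ne i}A_j/\sqrt{1-s_j}$ and $P=t+\sum_{j\ne i}A_j/(1-s_j)$; since $1-s_j\le 1$ one has $P>Q\ge 0$. Substituting $\mu=1/\sqrt{1-s_i}\in[1,\infty)$, the unconstrained share becomes $x_i=A_i\mu(P-Q\mu)/(A_i\mu+Q)$ and the payoff
\[
u_i \;=\; \frac{\mu^2-1}{\mu^2}\,x_i \;=\; A_i\,\frac{(\mu^2-1)(P-Q\mu)}{\mu\,(A_i\mu+Q)}.
\]
This formula is valid exactly for $\mu\in[1,P/Q]$, i.e. the range of fees for which pool $i$ actually receives a positive share; outside this range the true share is $0$, so the true payoff equals the expression above on $[1,P/Q]$ and is $0$ beyond it, which is continuous. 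On $[1,P/Q]$ the payoff vanishes at both endpoints and is strictly positive in between, so quasi-concavity is equivalent to unimodality, and it suffices to show the derivative changes sign exactly once. Differentiating $\log u_i$ and clearing denominators reduces this to showing that an explicit quartic in $\mu$ has a unique root in $(1,P/Q)$; the endpoint behaviour is easy (the log-derivative tends to $+\infty$ as $\mu\to 1^+$ and to $-\infty$ as $\mu\to(P/Q)^-$, so the number of roots is odd), and the remaining task is to rule out three roots in the interval.

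Combining these, continuity together with own-strategy quasi-concavity on the compact convex sets $[0,1-\delta]^n$ yields, by Debreu–Glicksberg–Fan, a pure Nash equilibrium of each restricted game, and the limiting argument promotes it to an equilibrium of the optimal fee game. I expect the unimodality/quartic computation to be the main obstacle — it is the only place the specific algebra of the constant product trade function is genuinely needed — while the fixed-point machinery and the boundary handling are comparatively routine.
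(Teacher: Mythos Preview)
Your overall plan coincides with the paper's: invoke the Debreu--Glicksberg--Fan theorem on the compact convex sets $S_i=[0,1]$ and verify continuity together with own-variable quasi-concavity of $u_i$. Your treatment of continuity (Berge's theorem, the truncation to $[0,1-\delta]^n$ and the limiting argument) is in fact more careful than the paper's; the paper simply works with the closed form $u_i=s_ix_i$ coming from \eqref{eq:x_i} and asserts continuity directly.

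The genuine gap is the quasi-concavity step. You correctly reduce it to unimodality of $u_i$ on $[1,P/Q]$ in the variable $\mu$, observe that the number of interior critical points is odd, and then stop, explicitly leaving ``rule out three roots of the quartic'' as the outstanding task. The paper closes exactly this hole, but by a shorter route than the one you propose: it substitutes $t_i=\sqrt{1-s_i}=1/\mu$ and analyses the \emph{upper level sets} $\{u_i\ge\alpha\}$ instead of critical points. Clearing denominators turns $u_i(t_i)\ge\alpha$ into $f_\alpha(t_i)\le 0$ for the \emph{cubic}
\[
f_\alpha(t_i)=C't_i^3+(\alpha-1)t_i^2+(\alpha D'-C')t_i+1,\qquad C'=P/Q,\ D'=A_i/Q.
\]
Since $f_\alpha(0)=1>0$, a disconnected sublevel set would force three zeros of $f_\alpha$ in $[0,1]$, hence two zeros $z_1,z_2\in[0,1]$ of the quadratic $f_\alpha'$; Vieta then gives $z_1z_2=(\alpha D'-C')/(3C')\ge 0$, i.e.\ $\alpha D'\ge C'$, and in that regime one checks directly that $f_\alpha>0$ on $[0,1]$, so the level set is empty. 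This two-line Vieta argument replaces your quartic analysis entirely; after the change of variables $\mu\mapsto 1/\mu$ it would finish your proof as well.
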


\begin{proof}
We use the following sufficient condition due to Debreu \cite{debreu52}, Fan \cite{fan52} and Glicksberg \cite{glicksberg52}.
In a continuous $n$-person game, let $S_i$ be the strategy set and let $u_i$ be the utility function of the $i$-th player for $i=1,\ldots,n$. Then a pure Nash equilibrium exists if for every $i=1,\ldots,n$
\begin{itemize}
    \item $S_i$ is compact and convex,
    \item $u_i(s)$ is continuous in $s$, and
    \item $u_i(s_i, s_{-i})$ quasiconcave in $s_{i}$.
\end{itemize}
In our setting, the set of strategies $S_i=[0,1]$ is clearly compact and convex and $u_i$ is continuous. So it only remains to prove the quasiconcavity of $u_i$. (Remember that $u_i=s_i x_i$ with $x_i$ given by \eqref{eq:x_i}.) We consider a fixed $s_{-i}$ and write $u_i(s_i, s_{-i})$ as $u_i(s_i)$ for simplicity.
By introducing the constants (in $s_i$) 
\begin{align*}
    C = \left(\sum_{j\neq i} \frac{A_j}{1-s_j}\right) + x,\quad
    D = \sum_{j\neq i} \frac{A_j}{\sqrt{1-s_j}},
\end{align*}
we can write $u_i$ as
\begin{equation}\label{eq:u_i}
    u_i(s_i) = s_i\left(\frac{A_i}{\sqrt{1-s_i}} \frac{\frac{A_i}{1-s_i}+C}{\frac{A_1}{\sqrt{1-s_i}}+D} - \frac{A_i}{1-s_i}\right) = A_i s_i \frac{\frac{C}{D}-\frac{1}{\sqrt{1-s_i}}}{\frac{A_i}{D}+\sqrt{1-s_i}}.
\end{equation}
To prove that $u_i$ is quasiconcave, we need to show that $L(\alpha)=\{s_i\in[0,1] \mid u_i(s_i)\geq \alpha\}$ is convex for all $\alpha \in \mathbb{R}$.
This means we need to prove that $L(\alpha)$ is a single (possibly empty) interval.
First, note that the multiplicative constant of $A_i$ in \eqref{eq:u_i} can be ignored for this purpose.
Furthermore, let us introduce $C'=C/D$ and $D'=A_i/D$.
Finally, we substitute $t_i=\sqrt{1-s_i}$.
Since the inverse $s_i = 1-t_i^2$ is continuous and strictly decreasing on $[0,1]$, the set $L(\alpha)$ is an interval if and only if
\begin{equation*}
    L'(\alpha) = \left\{t_i\in [0,1] \Bigm\vert (1-t_i^2)\left(C'-\frac{1}{t_i}\right)\frac{1}{D'+t_i}\geq \alpha \right\}
\end{equation*}
is. The latter inequality is equivalent to
\begin{equation}\label{eq:t_i_ineq}
    f(t_i) := C't_i^3 + (\alpha-1)t_i^2 +(\alpha D'-C')t_i +1 \leq 0.
\end{equation}
Now we distinguish two cases.
If $\alpha D'-C'\geq 0$, then $\alpha>0$ and consequently $(\alpha-1)t_i^2+1>0$. Thus, the left-hand side of \eqref{eq:t_i_ineq} is always positive meaning the set $L'(\alpha)$ is empty.

Otherwise, if $\alpha D'-C'< 0$, consider $f$'s derivative 
\begin{equation}\label{eq:t_i_deriv}
    f'(t_i) = 3C't_i^2 + 2(\alpha-1)t_i + \alpha D'-C'.
\end{equation}
We see that $f(0)=1 \nleq 0$. So in order for $L'(\alpha)$ not to be a single interval, the function $f$ must attain the value 0 at least three times on $[0,1]$.
In particular, this implies that its derivative has two zeros in $[0,1]$. This together with \eqref{eq:t_i_deriv} means it can be written as $f'(t_i) = 3C'(t_i-z_1)(t_i-z_2)$ with $z_1,z_2\in [0,1]$. However, this would imply $\alpha D'-C' = 3C'z_1z_2\geq 0$ which is a contradiction.

\end{proof}

In order to find the Nash equilibria, we consider the \emph{best response function} of each pool. The best response function of a player returns the optimal strategy for this player for a given set of strategies of the other players.
Formally, it is defined as $b_i(s_{-i}) = \argmax_{s_i\in S_i} u_i(s_i,s_{-i})$.
A Nash equilibrium $s^*$ then satisfies $s_i^* = b_i(s_{-i}^*)$ for all $i=1,\ldots,n$.
We will numerically compute the equilibria by repeatedly applying $s_i = b_i(s_{-i})$ for $i=1,\ldots,n$ until $s$ converges.

\section{Equilibrium Fees}
We will now look at the fee equilibria for several examples.
Figure \ref{fig:two_pools} shows the equilibrium fees for two Uniswap-like pools with a varying difference in size.

\begin{figure}[h]
    \centering
    \includegraphics[width=\textwidth]{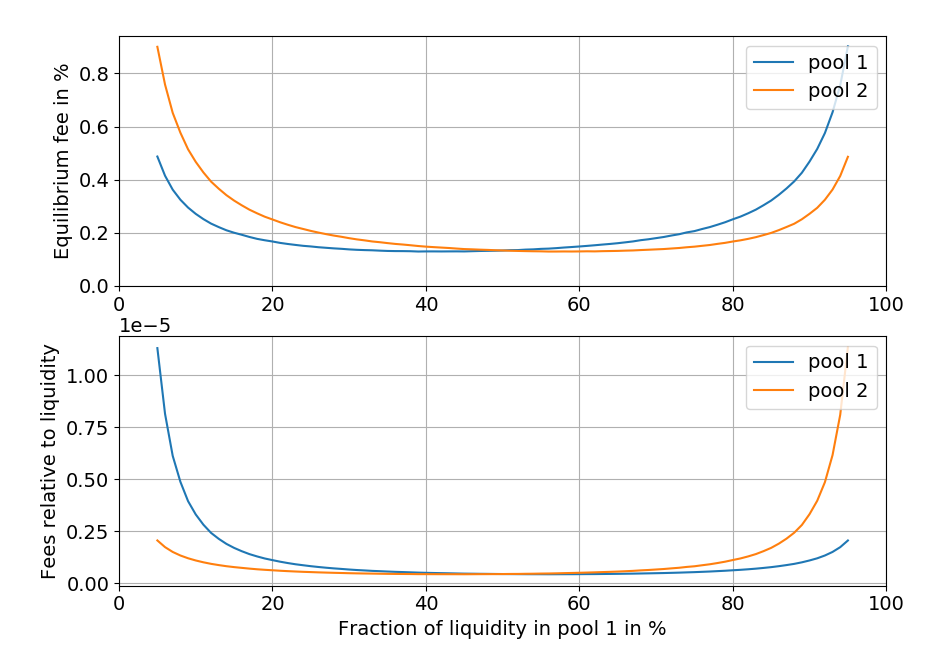}
    \caption{The two pools have a total size of 6,000,000 and the trade size is 1,000. The x-axis shows how much of the total liquidity is in pool 1, the rest is in pool 2. The upper graph plots the equilibrium fees while the lower graph shows the amount of fees the pool receives relative to its size.}
    \label{fig:two_pools}
\end{figure}

We see from the two curves in the upper graph that the larger of the two pools can always set a higher fee than the smaller one. That was to be expected as the larger pool has greater market power.
By only looking at the blue curve, we see a pool's equilibrium fee is high either when the pool is particularly large compared to its competitor or when it is particularly small. 
As one might expect, the lowest effective fee for the trader occurs when both pools have a similar size, i.e.\ when competition between the pools is "strongest".

Somewhat surprisingly, the lower graph shows that the smaller of the two pools earns a larger amount of fees relative to its size. So the liquidity providers providing liquidity to the smaller pool earn a higher return on their investment.
This incentivizes liquidity to move from the larger to the smaller pool. An equilibrium of this process is reached when both pools have the same size.

Next we consider three pools competing with each other. We vary the sizes of the first two pools while keeping the size of the third pools fixed.

\begin{figure}[h]
    \centering
    \includegraphics[width=\textwidth]{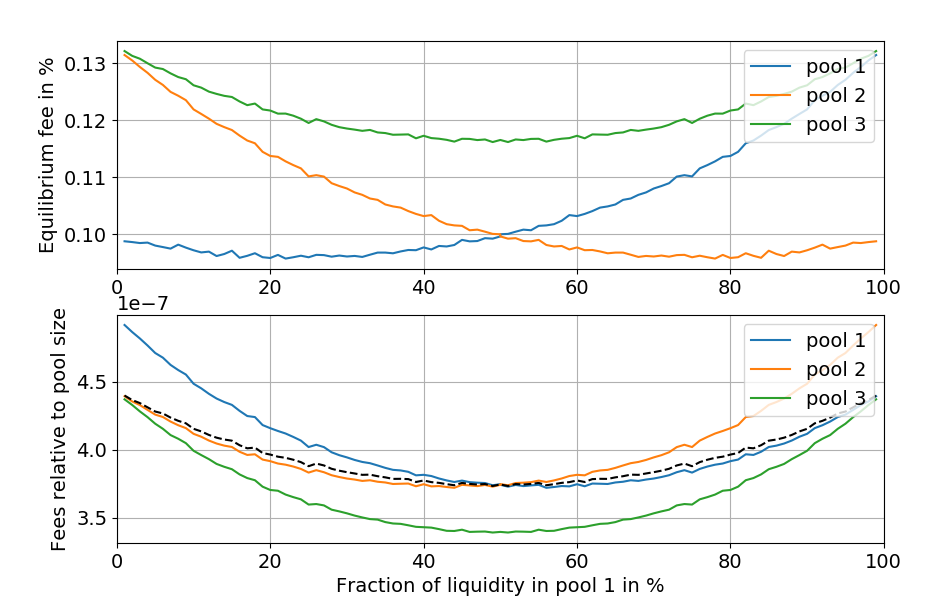}
    \caption{Pools 1 and 2 together have size 3,000,000. The x-axis indicates the part of this liquidity is in pool 1, the remaining part is in pool 2. Pool 3 has size 3,000,000.
    The dashed black line in the lower chart shows the (weighted) average of the relative fees in pools 1 and 2.}
    \label{fig:three_pools}
\end{figure}

Figure \ref{fig:three_pools} show that for three pools, we see similar effects as in the previous scenario with two pools: Larger pools can charge higher fees while smaller pools earn more fees relative to their liquidity.

Furthermore, we can compare two-pool and three-pool scenarios and reason about when it is advantages for one pool to split up into two new pools or for two pools to merge into one.
The relative fees of pools 2 and 3 for values on the $x$-axis close to 0 in Figure \ref{fig:three_pools} show the situation when we have two pools with size 3,000,000 each. 
We see that as long as the fraction of liquidity in pool 1 is between about 10\% to 90\%, the relative fees of pools 1 and 2 are both lower than the relative fees in the two-pool scenario.
So in these cases, it is beneficial for pools 1 and 2 to merge into a single pool.
In the remaining cases however, the smaller pool is better off alone. This also implies that for two equal-sized pools, it is beneficial for a small fraction of one of the pools to split off into a new pool.
On the other hand, the fact that the (weighted) average relative fee of pools 1 and 2 (the dashed black line) attains its maximum at 0 and 100 means that for the total liquidity in pools 1 and 2 as a whole it is always best to be combined in a single pool.

\section{Trade Size Distribution}
Obviously, in reality not all trades are of equal size. Accordingly, we now generalize our framework to more than a single trade size.
We assume trades are chosen from a discrete distribution, i.e.\ trade size $t_j$ occurs with a certain probability $p_j$ for $j\in J$.
This distribution could be derived from all trades observed in the recent past for a certain pair of tokens.
For a fee $s_i$ and a trade size $t_j$, let $x_i^*(s_i,t_j)$ be the solution of \eqref{eq:OTP}. Then the expected fees for pool $i$ are
\begin{equation}\label{eq:u_i_sum}
    u_i(s_i) = \sum_{j\in J} p_j s_i x_i^*(s_i,t_j).
\end{equation}
Using this utility function, we can calculate optimal fees from past trade data for real CFMM pools.
We consider the latest 1000 trades before block 12,000,000 (which occurred on 8 March 2021) from both the Uniswap and Sushiswap WETH-USDC pool, i.e.\ a sample of 2000 trades.
The liquidity of these pools stood at about \$262 million and \$401 million, respectively, at that time.
Assuming all these trades are executed optimally with the current $0.3\%$ fees, the Uniswap and Sushiswap pools would have received \$80,044 and \$122,051 in fees, respectively.

If the Sushiswap fee stays fixed, the best option for Uniswap would have been to lower its fee to $0.245\%$. This would have increased its earnings to \$102,892 with Sushiswap receiving \$76,225.
On the other hand, assuming Uniswap's fee remains untouched, Sushiswap's optimal fee choice would have been $0.27\%$ leading to \$53,285 and \$134,094 in earnings, respectively.

When considering more than one trade, it is no longer clear that a pure Nash equilibrium exists.
We can however use the average trades size of our sample to get an idea of the equilibrium-like fees.
Note that the formula for $x_i$ in \eqref{eq:x_i_lagrange} is actually linear in $t$, so using this formula with the utility function \eqref{eq:u_i_sum} is equivalent to simply using the average trade size. However, this is not completely accurate for multiple trade sizes: To arrive at the formula for $x_i$ we assumed that all pools attract a positive fraction of the trade in an equilibrium. But this is only true for a single trade at a time.
Nonetheless, taking the average trading size should give a reasonable estimate, in particular for relatively narrow distributions.
Using the average trade size, the equilibrium fee is $0.039\%$ and $0.045\%$ for Uniswap and Sushiswap, respectively. These fees would mean \$11,875 and \$16,867 in collected fees.

\section{Conclusion}
In this paper, we introduced a framework that allows us to determine optimal CFMM fees and reason about how the liquidity allocation among CFMM pools might evolve.
Of course, this does not answer the question of how to optimally set CFMM fees once and for all: Other aspects such as attracting fees from arbitrage trading (see related work) also need to be considered.
Also, trading fees for volatile tokens should be higher than for stable tokens to reward liquidity providers for the extra risk.
In any case, the situation of multiple pools for the same trading pair competing with each other is a relevant aspect of the problem. 
So the equilibria we find do give an indication how market forces could shape the fees.

\newpage

\printbibliography

\end{document}